\tikzstyle{decision} = [diamond, draw, fill=blue!20, 
\tikzstyle{block} = [rectangle, draw, fill=blue!20,  text centered, rounded corners, minimum height=4em]
\tikzstyle{line} = [draw, -latex']
\tikzstyle{cloud} = [draw, ellipse,fill=red!20, node distance=6.6cm,
\tikzstyle{algorithm} = [rectangle, draw, fill=green!20,  text centered, rounded corners, minimum height=4em, minimum width =6em]
\tikzstyle{initialization} = [rectangle, draw,   text centered, minimum height=4em, minimum width =6em]
\def\BibTeX{{\rm B\kern-.05em{\sc i\kern-.025em b}\kern-.08em
    T\kern-.1667em\lower.7ex\hbox{E}\kern-.125emX}}
    \tikzstyle{block}=[draw, rectangle, minimum height=1cm, text width=1.5cm, text centered, draw=darkgray, font=\small]
\tikzstyle{block_medium}=[draw, rectangle, minimum height=1.5cm, text width=2cm, text centered, draw=darkgray, font=\small]
\tikzstyle{block_large}=[draw, rectangle, minimum height=2cm, text width=2cm, text centered, draw=darkgray, font=\small]
\tikzstyle{line} = [draw, -latex]
\newtheorem{theorem}{Theorem}
\newtheorem{proposition}{Proposition}
\newtheorem{corollary}{Corollary}
\newtheorem{remark}{Remark}
\newcommand{\y}{\mathbf{y}}
\newcommand{\x}{\mathbf{x}}
\begin{document}

\title{Correcting One Error in Non-Binary Channels with Feedback}

\author{
\IEEEauthorblockN{Ilya Vorobyev\IEEEauthorrefmark{1},
                    Vladimir Lebedev, \IEEEauthorrefmark{2},
                    Alexey Lebedev \IEEEauthorrefmark{2}}\\
\IEEEauthorblockA{
\IEEEauthorrefmark{1}\textit{Institute for Communications Engineering}, 
\textit{Technical University of Munich}, 
ilya.vorobyev@tum.de}\\
\IEEEauthorblockA{\IEEEauthorrefmark{2}
\textit{Kharkevich Institute for Information Transmission Problems, Moscow},
\{lebedev37, al\_lebed95\}@mail.ru}
}
\maketitle

\begin{abstract}
In this paper, the problem of correction of a single error in $q$-ary symmetric channel with noiseless feedback is considered. We propose an algorithm to construct codes with feedback inductively. For all prime power $q$ we prove that two instances of feedback are sufficient
to transmit over the $q$-ary symmetric channel the same number of messages as in the case of complete feedback. Our other contribution is the construction of codes with one-time feedback with the same parameters as Hamming codes for $q$ that is not a prime power. We also construct single-error-correcting codes with one-time feedback of size $q^{n-2}$ for arbitrary $q$ and $n\leq q+1$, which can be seen as an analog for Reed-Solomon codes.
\end{abstract}

\section{Introduction}\label{sec::introdiction}

We consider the problem of correction of a single error in $q$-ary symmetric channel with noiseless feedback. This problem is a generalization of an analogous problem for a binary case, which was investigated in many papers.
It is known that the problem of correcting up to $t$ errors in the binary channel with complete feedback is equivalent to the following problem of combinatorial search. 
The task is to find an element $x \in \mathcal{M}$ with the help of $n$ questions of the form: ``Is it true that the element $x$ belongs to a subset $A$ of the set $\mathcal{M}$?'' Questions are asked sequentially, i.e., each question can depend on the answers to the previous ones. The opponent, who answers the questions, knows $x$ and can lie at most $t$ times. This problem was initially formulated by Renyi~\cite{renyi61}. 
It became popular after Ulam asked to solve it for  $M=10^6$ in his autobiography~\cite{ulam1991adventures}. Optimal strategies have been found for all $M$ in \cite{P87} for $t=1$, in \cite{G90} for $t=2$ and in \cite{D00} for $t=3$. For a linear number of errors $t=\tau n$ the optimal rate in a binary channel with complete feedback was calculated by Berlekamp~\cite{B68} and Zigangirov~\cite{zigangirov1976number}.

The problem of correcting errors in $q$-ary symmetric channel with complete feedback is equivalent to the generalization of Ulam's game, where you can ask to which of $q$ sets the element $x$ belongs. For a linear number of errors the generalization of Berlekamp's results for $q$-ary case was obtained in~\cite{ahlswede2006non}. For a fixed number of errors Bassalygo~\cite{bassalygo2005nonbinary} proved even one-time feedback is enough to transmit asymptotically the same number of messages as for the case with complete feedback. For an arbitrary discrete channel analogous result was obtained in \cite{dumitriu2005two}. The case of a fixed number of errors was also investigated in papers~\cite{cicalese2000perfect}, \cite{cicalese2003quasi}, and~\cite{ahlswede2008searching}. For one error the optimal number of messages was computed by Aigner in~\cite{aigner1996searching}.




\subsection{Our contribution}

In this paper we investigate the codes correcting a single error in a $q$-ary symmetric channel with one-time, two-times, or complete feedback.
We propose an algorithm, called non-binary Double and Delete Algorithm (DADA), which allows for constructing codes with feedback inductively. The most interesting result is the construction of a single-error correcting code for $q$-ary channel ($q$ is a prime power) with two instances of feedback, which has the same number of messages as the optimal code for a channel with complete feedback.
The other contribution is the construction of codes with one-time feedback, whose sizes lie on Hamming bound for $q$ that is not a prime power and $n=\frac{q^k-1}{q-1}$. For arbitrary $q$ and $n\leq q+1$, we present a code of size $q^{n-2}$ with one-time feedback. Note for a case without feedback and prime power $q$ Reed-Solomon codes have the same parameters. However, for $q$ that is not a prime power such codes are not known.

\subsection{Outline}
The rest of the paper is structured as follows. In Section~\ref{sec::notation} necessary notations and definitions are introduced. In Section~\ref{sec::non-binary BSC with one-time feedback} our results for a symmetric channel with one-time feedback are presented. An algorithm, which allows inductive construction of codes from codes of smaller length for the symmetric channel, is described in Section~\ref{sec::non-binary DADA}. Section~\ref{sec::complete feedback} is devoted to codes with complete feedback. In Section~\ref{sec::two instances of feedback} coding with two instances of feedback is discussed and the main theorem is proved. In Section~\ref{sec::other channel} an example of another $q$-ary channel, for which the developed methods allow to compute the maximal number of messages, is provided.

\section{Notations and Definitions}\label{sec::notation}
We use bold latin letters $\x$, $\y$, to refer to vectors, $[M]$ denotes the set $\{1, 2, \ldots, M\}$.  The prefix of length
$p$ of a vector $\y$ is denoted by $\y_{\overline{p}}$. 
In this paper we consider transmission over a $q$-ary symmetric channel, i.e., the input and output alphabets consist of $q$ symbols $\{0, 1, \ldots, q-1\}$, and each of these symbols can be transmitted incorrectly and received as any other symbol from the alphabet. We consider a channel with a single error, i.e., Hamming distance between transmitted vector $\x$ and received vector $\y$ is at most $1$. The coder can use $s$ instances of feedback in the following manner.  Let the
codeword length~$n$ be divided into $s+1$ parts:
$$
n=n_1+n_2+\ldots+n_{s+1}.
$$
The coder encodes a message $m\in [M]$ as follows. The first $n_1$ transmitted symbols
$x_1,\ldots, x_{n_1}$ depend on the message $m$ only. After $N_{i-1}:=n_1+\ldots
+n_{i-1}$, $i\ge 2$, symbols are transmitted, the encoder has values of the received
symbols $\y^{}_{\overline{N_{i-1}\!}}$ from the feedback channel. The encoder
sends the $i$th block of $n_i$ symbols, which is a function of the message $m$ and of
the symbols $\y^{}_{\overline{N_{i-1}\!}}$ received by the encoder. The case of
$s=0$ corresponds to a channel without feedback, and the case of $s=n-1$, to a
channel with complete feedback.

Fix the encoding function. Define the cloud $B(m)$ for a message $m$ to be the set of
sequences $\y$ that can be obtained at the output of the channel with at
most one error during the transmission associated with this message. We refer to the collection
of disjoint clouds $B(m)$, $m\in [M]$, as a single-error correcting
code~$\mathcal{C}$. Sequences that do not belong to any cloud will be
referred to as free sequences (or free points) and will be denoted by~$\mathcal{F}(\mathcal{C})$. Codes
that do not use feedback will be called nonadaptive.

Let us describe the structure of a cloud for a non-binary channel with a
single error and complete feedback. Every cloud $B(m)$ contains a sequence $\y$
which will be transmitted if there are no errors in the channel. We call it a root sequence. For any coordinate $i$, the cloud contains $q-1$ sequences $\y(i, j)$, $j=1, \ldots, q-1$, which
coincide with $\y$ in the first $i-1$ positions, have symbols in $i$th
position, which are different from each other and different from $\y_i$, and has arbitrary symbols in all other positions. Hence it is seen that
each cloud consists of at least $(q-1)n+1$ sequences. In particular, this yields the well-known
Hamming bound on the maximum number of transmitted messages with complete feedback.

\section{Non-binary symmetric channel with one-time feedback}\label{sec::non-binary BSC with one-time feedback}
Consider an arbitrary discrete memoryless channel with input and output alphabets $\{0, 1, \ldots, q-1\}$.
Divide the codeword length $n$ into two parts, $n_1$ and $n_2$,
with $n=n_1+n_2$. Define a bipartite graph $H=(U\sqcup V, E)$ as follows. The
left- and right-hand parts $U$ and $V$ consist of $q^{n_1}$ vertices corresponding to the sets of
input and output sequences. Vertices $u$ and $v$ are connected by an edge if the
sequence corresponding to $v$ can be obtained from the sequence corresponding to~$u$
as a result of a single error. Note that vertices corresponding to
identical sequences are not connected (this corresponds to the case with no error). In other words, this graph describes possible errors in the channel.

Recall the theorem proved in~\cite{vorobyev2022correcting}.
\begin{theorem}\label{th::main}
Let a $q$-ary channel and its associated bipartite graph $H=(U\sqcup V, E)$ be given. A strategy allowing to transmit
\begin{equation}\label{number of transmitted words}
M=\sum\limits_{u\in U}M(u)
\end{equation}
messages over this channel with a single error exists if and only if there exists a family of single-error-correcting codes
$C(u)$ of length~$n_2$ and cardinality $M(u)$ with $F(u)$ free points that satisfy
the condition
\begin{equation}\label{constraints}
\sum\limits_{u:\: (u,v)\in E} M(u)\le F(v)
\end{equation}
for any $v\in V$.
\end{theorem}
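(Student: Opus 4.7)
The plan is to prove both directions of this characterization separately, starting with necessity.

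For necessity, I would suppose a one-time feedback strategy exists that transmits $M$ messages. The first-block map sends each message $m$ to some $u(m) \in U$, and I set $M(u) := |\{m : u(m)=u\}|$, so $M=\sum_u M(u)$. For each $v \in V$ received via feedback after the first block, the encoder's second-block rule determines $g(m,v)\in\{0,\ldots,q-1\}^{n_2}$ for each message $m$. Now I split messages by cases on where the error could have occurred. Messages with $u(m)=v$ correspond to an error-free first block, so any error must fall in the second block; hence the codewords $\{g(m,v): u(m)=v\}$ must be a single-error-correcting code of length $n_2$ in the second block — this is exactly the code $C(v)$ of cardinality $M(v)$. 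Messages with $u(m)=u\ne v$ and $(u,v)\in E$ correspond to the error having already happened in the first block, so the second block must be received with no further errors; the received symbol string equals $g(m,v)$ exactly, and for unambiguous decoding this point must (i) lie outside every Hamming ball of radius one around the codewords of $C(v)$ (i.e., be a free point of $C(v)$) and (ii) be distinct from $g(m',v)$ for any other such $m'$. This yields exactly $\sum_{u:(u,v)\in E} M(u) \le F(v)$.

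For sufficiency, I would take the hypothesized family $\{C(u)\}_{u\in U}$ and build the strategy. Partition $[M]$ into groups of sizes $M(u)$, letting message $m$ in group $u$ transmit first block $u$. For each $v\in V$, since the bipartite constraint gives $\sum_{u:(u,v)\in E} M(u)\le F(v)$, I can fix an injection $\phi_v$ from $\{m : u(m)\ne v,\ (u(m),v)\in E\}$ into the free points of $C(v)$. The encoder's second-block rule is: if the feedback yields $v=u(m)$, transmit the codeword of $C(v)$ assigned to $m$; if $v\ne u(m)$ (necessarily with $(u(m),v)\in E$, since the channel allows at most one error), transmit $\phi_v(m)$. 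Decoding of a received pair $(v,y_2)$ proceeds by looking up: if $y_2$ lies in a Hamming ball of radius one around a codeword of $C(v)$, output the corresponding message in group $v$; otherwise $y_2$ is a free point and must equal some $\phi_v(m)$, so output that $m$. Unambiguity follows from the single-error-correcting property of each $C(v)$, the injectivity of $\phi_v$, and the fact that free points, by definition, are disjoint from every codeword cloud of $C(v)$.

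The main subtlety I expect is the case analysis in the necessity direction: I need to justify that the structure inherited from an arbitrary strategy really factors into a per-$v$ decomposition of "local code plus free points," rather than something more entangled. The key observation that makes this work is that the channel tolerates at most one error, so conditional on the first-block pair $(u,v)$ with $u\ne v$, the second block is exactly received, while conditional on $v=u$ the second-block transmission faces the full single-error channel. This binary split cleanly separates the codeword roles from the free-point roles. The sufficiency direction is essentially a matching argument made trivial by the inequality, and the decoding correctness is a direct check.
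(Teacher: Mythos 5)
Your proof is correct, but note that the paper itself does not prove Theorem~\ref{th::main}: it only recalls it from~\cite{vorobyev2022correcting}, and your two-direction argument---codewords of $C(v)$ with their disjoint radius-one clouds for messages whose first block arrives intact, and injectively assigned free points of $C(v)$ for messages whose single error was already spent in the first block---is exactly the standard argument behind that result and behind every application of it in this paper. The key case split you identify (received first block equal to, versus a neighbor of, the transmitted one) is what yields both the single-error-correcting property of each $C(v)$ and the counting condition~\eqref{constraints}, so there is no gap.
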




Theorem~\ref{th::main} allows to prove the following
\begin{theorem}
\label{th::q-ary symmetric channel with one-time feedback}
Let $n_1, n_2$, $n_1 + n_2=n$. If there exists a single-error correcting code of length $n_2$ and size
$
\left\lfloor
\frac{q^{n_2}}{(q-1)n+1}
\right\rfloor,
$
then it is possible to transmit
$$
M_1(n)=q^{n_1}
\cdot \left\lfloor
\frac{q^{n_2}}{(q-1)n+1}
\right\rfloor
$$ 
messages over a non-binary channel with one error and one-time feedback.
\end{theorem}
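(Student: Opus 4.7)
The plan is to derive Theorem~\ref{th::q-ary symmetric channel with one-time feedback} as a direct application of Theorem~\ref{th::main}, specializing the bipartite graph $H=(U\sqcup V, E)$ to the $q$-ary symmetric channel acting on the first block of length $n_1$. First I would observe that $|U|=|V|=q^{n_1}$, and that, since identical sequences are not joined by an edge, every $v\in V$ has exactly $(q-1)n_1$ neighbors in $U$: namely the sequences at Hamming distance $1$ from $v$.

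Next I would pick the uniform allocation $M(u):=M_0:=\lfloor q^{n_2}/((q-1)n+1)\rfloor$ for every $u\in U$, so that $\sum_{u\in U}M(u)=q^{n_1}M_0$ matches the target $M_1(n)$ in~\eqref{number of transmitted words}. For each $u$ I would take $C(u)$ to be a copy of the single-error-correcting code of length $n_2$ and size $M_0$ supplied by the hypothesis. A cloud of such a nonadaptive code consists of a codeword together with the $(q-1)n_2$ sequences at Hamming distance $1$ from it, hence
$$
F(v)=q^{n_2}-M_0\bigl((q-1)n_2+1\bigr)
$$
for every $v\in V$.

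It then remains to verify~\eqref{constraints}. With the uniform allocation the required inequality becomes $(q-1)n_1\cdot M_0 \le q^{n_2}-M_0\bigl((q-1)n_2+1\bigr)$, which rearranges to $M_0\bigl((q-1)n+1\bigr)\le q^{n_2}$, exactly the estimate built into the definition of $M_0$ via the floor. Theorem~\ref{th::main} then finishes the argument. I do not anticipate a genuine obstacle; the only subtle bookkeeping is that the cloud size in the length-$n_2$ codes is $(q-1)n_2+1$ rather than $(q-1)n+1$, and one must combine these free points with the $(q-1)n_1$ possible first-block errors to recover the full factor $(q-1)n+1$ appearing in the hypothesis.
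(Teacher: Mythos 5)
Your proof is correct and follows essentially the same route as the paper: apply Theorem~\ref{th::main} with the uniform allocation $M(u)=M_0=\left\lfloor q^{n_2}/((q-1)n+1)\right\rfloor$, count the $(q-1)n_1$ neighbors of each $v$, and verify~\eqref{constraints}. In fact your free-point count $F(v)=q^{n_2}-M_0\bigl((q-1)n_2+1\bigr)$ is the accurate one; the paper's displayed condition uses $(n_2+1)M$ in place of $\bigl((q-1)n_2+1\bigr)M$, an apparent typo whose subsequent simplification to $(q-1)n+1$ would only be valid for $q=2$, so your version is the cleaner statement of the same argument.
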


\begin{proof}[Proof of Theorem~\ref{th::q-ary symmetric channel with one-time feedback}]
We apply Theorem~\ref{th::main} where as codes $C(u)$ we take the code of size 
$$
M(u)=M=\left\lfloor
\frac{q^{n_2}}{(q-1)n+1}
\right\rfloor.
$$

To satisfy conditions~\eqref{constraints} of Theorem~\ref{th::main} the following inequality
$$
(q-1){n_1}M\leq q^{n_2}-(n_2+1)M
$$
must be true. It is true, since it is equivalent to
$$
M\leq \frac{q^{n_2}}{{n_1}(q-1)+(n_2+1)}=\frac{q^{n_2}}{(q-1)n+1},
$$
which holds by definition of $M$.

By using Theorem~\ref{th::main} we conclude that it is possible to transmit $q^{n_1}M$ messages.

\end{proof}



\begin{corollary}\label{cor::hamming with feedback}
Let $n=\frac{q^k-1}{q-1}$, $k\geq 2$. Then the maximal number of messages that can be transmitted over a $q$-ary symmetric channel with a single error and one-time feedback is equal to
$$
M_1=q^{n-k}.
$$
\end{corollary}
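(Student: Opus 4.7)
The plan is to match a Hamming-type upper bound against a construction delivered by Theorem~\ref{th::q-ary symmetric channel with one-time feedback} applied to a completely trivial second-stage code.

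For the upper bound, I would reuse the cloud description given in Section~\ref{sec::notation}. That description shows that any cloud $B(m)$ contains the no-error trajectory together with, for each coordinate $i\in\{1,\dots,n\}$ and each of the $q-1$ erroneous values in position~$i$, a further distinct output sequence. The argument is stated there for complete feedback but uses nothing more than the definition of a cloud, so it applies under any feedback regime, in particular under one-time feedback. Since the clouds are pairwise disjoint subsets of $\{0,1,\dots,q-1\}^n$ and $(q-1)n+1 = q^k$ by the hypothesis $n=(q^k-1)/(q-1)$, summing over the messages yields
\begin{equation*}
M_1 \le \frac{q^n}{(q-1)n+1} = \frac{q^n}{q^k} = q^{n-k}.
\end{equation*}

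For the matching lower bound, I would invoke Theorem~\ref{th::q-ary symmetric channel with one-time feedback} with the split $n_1 = n-k$, $n_2 = k$. This is legitimate because $n = 1+q+\dots+q^{k-1}\ge k$ for $q\ge 2$ and $k\ge 2$. The second-stage code required by the theorem has length $n_2=k$ and cardinality
\begin{equation*}
\left\lfloor \frac{q^{n_2}}{(q-1)n+1}\right\rfloor = \left\lfloor \frac{q^k}{q^k}\right\rfloor = 1,
\end{equation*}
and a single-error correcting code of size~$1$ (any single codeword, whose cloud is its Hamming ball) exists for every alphabet size~$q$, with no prime-power hypothesis needed. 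Theorem~\ref{th::q-ary symmetric channel with one-time feedback} then produces a one-time-feedback scheme transmitting $q^{n_1}\cdot 1 = q^{n-k}$ messages, which matches the upper bound.

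The only substantive ingredient is the algebraic coincidence $(q-1)n+1 = q^k$ built into the hypothesis on $n$: it both makes the Hamming bound tight and collapses the cardinality demanded of the second-stage code to~$1$, which is why the corollary survives even when $q$ is not a prime power and no Hamming code of length~$n$ is available. The only care needed is to check $n\ge k$ so that the split $(n_1,n_2)=(n-k,k)$ is admissible, which is immediate from the geometric-series expression for~$n$. Beyond that there is no genuine obstacle; the real work is entirely contained in Theorem~\ref{th::q-ary symmetric channel with one-time feedback}.
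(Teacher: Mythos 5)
Your proposal is correct and follows essentially the same route as the paper: apply Theorem~\ref{th::q-ary symmetric channel with one-time feedback} with $n_1=n-k$, $n_2=k$, noting that the required second-stage code has size $\lfloor q^k/((q-1)n+1)\rfloor=1$ and trivially exists, then conclude optimality from the Hamming bound. The only difference is that you spell out the cloud-counting argument for the upper bound, which the paper simply cites as ``lies on Hamming bound.''
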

\begin{proof}[Proof of Corollary~\ref{cor::hamming with feedback}]
    We apply Theorem~\ref{th::q-ary symmetric channel with one-time feedback} for $n_2=k$, $n_1=n-k$. Code of size $\left\lfloor
\frac{q^{n_2}}{(q-1)n+1}
\right\rfloor=1$ always exists, therefore we can transmit
$
q^{n_1}=q^{n-k}
$
messages. This number of messages is maximal since it lies on Hamming bound.
\end{proof}
\begin{remark} 
When $q$ is prime power and $n=\frac{q^k-1}{q-1}$ it is possible to transmit $q^{n-k}$ messages even without feedback by using Hamming codes. Our
Corollary~\ref{cor::hamming with feedback} shows that with one-time feedback such number of messages can be transmitted also for $q$ that are not prime powers.
\end{remark}
There exist other pairs $n$ and $q$, for which codes from Theorem~\ref{th::q-ary symmetric channel with one-time feedback} achieve Hamming bound. For example, for $q=6$, $n=97$, $n_1=92$, $n_2=5$, conditions of Theorem~\ref{th::q-ary symmetric channel with one-time feedback} are satisfied due to Gilbert-Varshamov bound. Then $q^{n_2}$ is divisible by $(q-1)n+1$ and $M_1(n)=\frac{q^n}{(q-1)n+1}$, i.e., the number of messages coincides with a Hamming bound.

In the following theorem we provide codes with one-time feedback, which can be seen as an analog to Reed-Solomon codes for arbitrary $q$.
\begin{theorem}\label{th::n less q+1}
	For an arbitrary $q$ and $n\leq q+1$ there exists a code with one-time feedback of size $q^{n-2}$.
\end{theorem}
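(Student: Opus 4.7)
The plan is to obtain Theorem~\ref{th::n less q+1} as a one-line specialization of Theorem~\ref{th::q-ary symmetric channel with one-time feedback}, by choosing the block split $n_1 = n-2$, $n_2 = 2$. Since the target size is $q^{n-2} = q^{n_1}$, I need the floor factor $\bigl\lfloor q^{n_2}/((q-1)n+1)\bigr\rfloor$ appearing in that theorem to be at least $1$, and I need a single-error-correcting code of length $2$ and size $1$ to exist. The latter is trivial: any single word of length $2$ forms such a code.

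To verify the former, I would use the hypothesis $n\le q+1$ to estimate
\[
(q-1)n + 1 \;\le\; (q-1)(q+1) + 1 \;=\; q^{2},
\]
so that $q^{2}/((q-1)n+1)\ge 1$ and hence $\bigl\lfloor q^{n_2}/((q-1)n+1)\bigr\rfloor\ge 1$. Plugging this into Theorem~\ref{th::q-ary symmetric channel with one-time feedback} gives a code with one-time feedback of size at least $q^{n_1}\cdot 1 = q^{n-2}$, as required.

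Intuitively, the condition $n\le q+1$ is exactly the regime where the ``error in the first block, no error in the second block'' branches can each be routed to a distinct free point around a single length-$2$ codeword: in the bipartite graph $H$ each $v\in V$ has $(q-1)n_1$ neighbors, while a single length-$2$ codeword has $(q-1)n_2+1 = 2q-1$ points in its cloud and therefore $q^{2}-(2q-1)=(q-1)^{2}$ free points, and the constraint $(q-1)n_1\le (q-1)^{2}$ of Theorem~\ref{th::main} reduces precisely to $n\le q+1$. There is no real obstacle: once the correct split $n_2=2$ is identified, the statement follows by the single arithmetic check above, so I expect the proof to consist of essentially one invocation of Theorem~\ref{th::q-ary symmetric channel with one-time feedback}.
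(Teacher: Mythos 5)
Your proposal is, in substance, the paper's own proof: the paper applies Theorem~\ref{th::main} directly with $n_1=n-2$, $n_2=2$ and size-one codes $C(u)$, checking $(q-1)n_1\le q^2-(1+2(q-1))=(q-1)^2$, i.e.\ $n\le q+1$ --- exactly the computation in your closing paragraph. One caveat about your headline route: Theorem~\ref{th::q-ary symmetric channel with one-time feedback} as stated hypothesizes a length-$n_2$ code of size exactly $\bigl\lfloor q^{n_2}/((q-1)n+1)\bigr\rfloor$, and with $n_2=2$ this floor can exceed $1$ when $n$ is small relative to $q$ (e.g.\ $q=10$, $n=3$ gives $\lfloor 100/28\rfloor=3$); since no single-error-correcting code of length $2$ has more than one codeword, that theorem cannot then be invoked literally, and ``floor at least $1$ plus a size-one code'' is not what its hypothesis asks for. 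This is harmless here only because your final paragraph already verifies the condition of Theorem~\ref{th::main} directly: each $v$ has $(q-1)n_1$ neighbours, a single length-$2$ codeword leaves $q^2-(2q-1)=(q-1)^2$ free points, and $(q-1)n_1\le(q-1)^2$ is equivalent to $n\le q+1$, giving $\sum_u M(u)=q^{n-2}$ messages. That computation is precisely the paper's argument and covers all $n\le q+1$; promote it from ``intuition'' to the proof itself and drop the detour through Theorem~\ref{th::q-ary symmetric channel with one-time feedback}.
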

Note that even with complete feedback it is impossible to transmit more than $q^{n-2}$ messages~\cite{aigner1996searching}.
\begin{proof}[Proof of Theorem~\ref{th::n less q+1}]
	We are going to use Theorem~\ref{th::main}.
	Let $n_1=n-2$, $n_2=2$. As codes $C(u)$, we use codes of size 1. Let us check the condition~\eqref{constraints}.
	$$
	(q-1)\cdot n_1\leq q^2-(1+2(q-1)),
	$$
	which is equivalent to
	$$
	n\leq q+1.
	$$
	The number of transmitted messages is equal to the sum of sizes of codes $C(u)$, i.e., $q^{n-2}.$
\end{proof}

 \section{Non-binary DADA}\label{sec::non-binary DADA}
 In paper~\cite{vorobyev2022correcting} the authors proposed DADA (Double and Delete Algorithm), which constructs codes with feedback for binary symmetric channel inductively from codes with smaller lengths. Here we generalize this approach for $q$-ary channel.

Let us briefly describe non-binary DADA algorithm.

Assume that we have a set of $M(n-1)$ clouds for the length $n-1$. 

\begin{enumerate}
\item From each cloud of sequences of length $n-1$, we construct $q$ sets of sequences of length $n$. To construct the $i$-th set we  add the prefix $x=i-1$ to each sequence of the cloud. Call these sets incomplete clouds. 
\item We add sequences, which do not  belong to any cloud or incomplete cloud, to incomplete clouds to transform them into clouds for length $n$.
\item While there are incomplete clouds we delete some of them and use their sequences to transform some incomplete clouds into clouds.
\end{enumerate}

The details of how we choose which incomplete clouds to delete are given in the proof of Theorem~\ref{th::non-binary DADA}.

\begin{theorem}\label{th::non-binary DADA}
Let $n>q$. Assume that there exists a code of length $n-1$ with $s-1$ instances of feedback of size $M(n-1)$, each cloud of which has cardinality $1+(n-1)(q-1)$, correcting one error in $q$-ary symmetric channel, $s=1,\ldots, n-1$.
Let 
\begin{align}
    U_q(n) &= q\left\lfloor\frac{q^n}{q(1+n(q-1))}\right\rfloor,\\
    p_q(n) &= (1+n(q-1))(U_q(n)+q)-q^n,\\
    H(n) &= \left\lfloor\frac{q^n}{1+n(q-1)}\right\rfloor.
\end{align}
Then non-binary DADA constructs a code $C^q_s(n)$ of length $n$ with $s$ instances of feedback and size $M(n)$, where
$$
M(n)=
\begin{cases}
qM(n-1), &\;\;\text{ if } qM(n-1)< H(n);\\
U_q(n), &\begin{array}{l}\text{ if } qM(n-1)\geq H(n)\\ \text{ and } p_q(n)\geq q^2;
\end{array}\\
U_q(n)+q-r,&\begin{array}{l}\text{ if }qM(n-1)\geq H(n)\\\text{ and }p_q(n)=qr.
\end{array}
\end{cases}
$$
\end{theorem}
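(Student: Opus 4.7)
The plan is to track the DADA algorithm step by step and reduce the problem to a single counting inequality for each of the $q$ prefix classes; the three branches of $M(n)$ then correspond to three regimes of this inequality.

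\textbf{Setup.} After Step~1 of the algorithm we have $qM(n-1)$ incomplete clouds of length $n$, each of size $1+(n-1)(q-1)$, obtained by prepending each of the $q$ prefix symbols to every input cloud. The length-$n$ free sequences are precisely those of the form $(j,\mathbf{z})$ with $\mathbf{z}$ a length-$(n-1)$ free point, numbering $qf$, where $f:=q^{n-1}-M(n-1)(1+(n-1)(q-1))$. An incomplete cloud with prefix $i-1$ and root $(i-1,\y)$ still needs $q-1$ position-$1$-error sequences $(j,\mathbf{z}_j)$ with $j\neq i-1$, where $\mathbf{z}_j$ is the tail the encoder sends after reading $y_1=j$ via the first of the $s$ feedback instances; the remaining $s-1$ instances are inherited from the input length-$(n-1)$ code. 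Let $d_j$ denote the number of prefix-$j$ incomplete clouds deleted in Step~3 and $D=\sum_j d_j$, so $M(n)=qM(n-1)-D$. For each prefix $j$, the number of first-coordinate-$j$ sequences available for completion equals $f+d_j(1+(n-1)(q-1))$, while the number of surviving clouds needing such a sequence equals $(q-1)M(n-1)-D+d_j$. Since any available sequence may be assigned to any eligible cloud, Step~3 succeeds iff for every $j$,
\[
(q-1)M(n-1)-D+d_j \;\le\; f+d_j\bigl(1+(n-1)(q-1)\bigr).
\]

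\textbf{Case analysis.} Write $L=1+n(q-1)$ and $q^{n-1}=kL+r'$ with $0\le r'<L$ (so $k=\lfloor q^{n-1}/L\rfloor$ and $U_q(n)=qk$), and let $e:=M(n)-U_q(n)$. Substituting $D=qM(n-1)-U_q(n)-e$ collapses the per-prefix inequality to
\[
d_j \;\ge\; (M(n-1)-k)+\frac{e-r'}{(n-1)(q-1)}.
\]
\emph{(i) If $qM(n-1)<H(n)$:} take $d_j=0$; the inequality reduces to $qM(n-1)\le H(n)$, and $M(n)=qM(n-1)$. \emph{(ii) If $qM(n-1)\ge H(n)$ and $p_q(n)\ge q^2$} (equivalently $r'\le(n-1)(q-1)$): set $d_j=M(n-1)-k$ uniformly, so $e=0$ and the inequality becomes $0\ge -r'/((n-1)(q-1))$. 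Any attempt to lower some $d_j$ would demand $r'\ge 1+(n-1)(q-1)$, contradicting the regime; hence $M(n)=U_q(n)$. \emph{(iii) If $p_q(n)=qr$ with $1\le r<q$} (equivalently $r'=L-r>(n-1)(q-1)$): set $e=q-r$ of the $d_j$'s to $M(n-1)-k-1$ and the remaining $q-e$ to $M(n-1)-k$. At the lowered indices the inequality reads $r'\ge e+(n-1)(q-1)$, which collapses to the tautology $q\ge q$ after substituting $r'=L-r$ and $e=q-r$. Lowering any $d_j$ further to $M(n-1)-k-2$ would require $r'\ge (e+1)+2(n-1)(q-1)$, which for $n>q\ge 2$ exceeds $L$ and is thus impossible. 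Hence $M(n)=U_q(n)+q-r$.

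\textbf{Main obstacle.} The main technical step is converting the counting inequality into the compact form $d_j\ge (M(n-1)-k)+(e-r')/((n-1)(q-1))$ and verifying tightness of the asymmetric deletion in Case~(iii) against further lowerings. Once the inequality is in this form, matching each regime to an explicit choice of $\{d_j\}$ is routine: the bipartite matching of available sequences to claiming clouds is trivially satisfied (the graph is complete within each prefix class since any available prefix-$j$ sequence can serve any prefix-$\neq j$ cloud), the feedback bookkeeping (one instance at position~$1$, the remaining $s-1$ inherited from the length-$(n-1)$ code) is direct, and the nonnegativity $d_j\ge 0$ follows from $qM(n-1)\ge H(n)\ge qk$ in Cases~(ii)--(iii).
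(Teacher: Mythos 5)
Your proposal is correct in substance but reaches the result by a genuinely different route than the paper. The paper's proof literally runs the algorithm: it completes incomplete clouds $q$ at a time while keeping the prefix classes balanced, repeatedly deletes one incomplete cloud per prefix symbol to generate $q(1+(q-1)(n-1))$ free points, and then does an end-game case analysis on the remaining $qm$ incomplete clouds and $q\ell$ free points ($m\le n$, $\ell<q-1$), counting leftovers to identify the final size with $U_q(n)$ or $U_q(n)+q-r$. You instead recast the whole procedure statically: parametrize by the number $d_j$ of deleted prefix-$j$ incomplete clouds, observe that completion is possible iff the per-prefix counting condition $(q-1)M(n-1)-D+d_j\le f+d_j(1+(n-1)(q-1))$ holds (a Hall condition that is trivial here because any first-symbol-$j$ sequence serves any cloud with a different prefix), and then reduce it, via $q^{n-1}=kL+r'$ with $L=1+n(q-1)$, to $d_j\ge (M(n-1)-k)+(e-r')/((n-1)(q-1))$. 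Your algebra is right (in particular $p_q(n)=q(L-r')$, so $p_q(n)\ge q^2\Leftrightarrow r'\le(n-1)(q-1)$ and $p_q(n)=qr\Leftrightarrow r=L-r'$), and the three branches of the theorem fall out cleanly from explicit choices of the deletion vector. This buys a more transparent derivation of the exact formulas than the paper's invariant-tracking, at the cost of reading "what DADA constructs" as "the best achievable by the delete-and-complete scheme" rather than following the paper's specific greedy schedule.

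Two small points to tighten. First, in Case (iii) the lowered values are $d_j=M(n-1)-k-1$, so you need $M(n-1)\ge k+1$; your justification "$qM(n-1)\ge H(n)\ge qk$" only gives $M(n-1)\ge k$. The fix is immediate: in this regime $qr\le q(q-1)<L$ (using $n>q$), hence $H(n)=qk+q-1$, and $qM(n-1)\ge qk+q-1$ forces $M(n-1)\ge k+1$ by integrality. Second, your tightness check in Case (iii) only excludes lowering a single $d_j$ by two; it does not address lowering $q-r+1$ (or more) indices by one each. The same computation closes this: if the total deficit is $e\ge q-r+1$, some index has $d_j\le M(n-1)-k-1$, and its constraint demands $r'\ge e+(n-1)(q-1)\ge L-r+1=r'+1$, a contradiction (this also subsumes your Case (ii) argument). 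Neither point changes the structure of your proof, and the constructive half — which is what the theorem is used for later — is complete as written.
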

\begin{remark}
    $U_q(n)$ is the maximal number, which is not greater than Hamming bound and divisible by $q$, 
    $p_q(n)$ is the number of points, needed for $U_q(n)+q$ clouds minus the total number of points, $H(n)$ is an upper Hamming bound on the size of a code of length $n$. Theorem says that the size of the code is multiplied by $q$ if the new size is less than Hamming bound; otherwise, the new size is equal to or slightly bigger than the $U_q(n)$.
\end{remark}
\begin{proof}[Proof of Theorem~\ref{th::non-binary DADA}]
Let us describe the structure of a cloud for the non-binary symmetric channel. Each cloud contains a root sequence $\x$ together with other $(q-1)n$ sequences. For each coordinate $i$ from $1$ to $n$ and each symbol of the alphabet $z\ne x_i$, there is a sequence $\y$ in the cloud, such that $\x_{\overline{i-1}}=\y_{\overline{i-1}}$ and $y_i=z$.

From each cloud of sequences of the length $n-1$ we construct $q$ sets of sequences of length $n$ by adding to each sequence the prefix $i - 1$ for $i$-th set, $i=1, 2, \ldots, q$. To make a cloud from $i$-th set  it is enough to add any $q-1$ sequences starting from all $j\ne i - 1$. 
Throughout the proof we will refer to sequences, which do not belong to any complete or incomplete cloud, as free points.
Free points will be used to create clouds from incomplete clouds. We take $q(q-1)$ free points, $q-1$ of which are beginning from the symbol $s\in\{0, 1, \ldots, q-1\}$. Then we use these free points to turn $q$ incomplete clouds into clouds. We note that after this operation for any two symbols $i$ and $j$ the numbers of incomplete clouds, which sequences start with $i$ and $j$, are the same. Also, the numbers of free sequences beginning with $i$ and $j$ are the same. We repeat the described operation while it is possible. If we cannot perform the operation because there are no more incomplete clouds, then we have obtained $qM(n-1)$ clouds and some number of free points, where $M(n-1)$ is the size of $C_{k-1}^q(n-1)$. In this case, the algorithm is completed.

Otherwise, at the end of this procedure, we obtain some number of clouds, incomplete clouds, and less than $q(q-1)$ free points.

In what follows we take $q$ incomplete clouds, all sequences of $i$-th cloud start with a symbol $i-1$. We delete these incomplete clouds and transform all their sequences into free points. Such operation gives $q(1+(q-1)(n-1))$ free points, which are spent to create clouds from incomplete clouds in the same manner as in the first part of the algorithm. One can see that the amount of free points, obtained from $q$ incomplete clouds, is enough to transform at $q(n-1)$ or $qn$ incomplete clouds into clouds. So, if the number of incomplete clouds is at least $q(n+1)$, then we can perform this operation and obtain less than $q(q-1)$ free points at the end. We repeat this operation until the number of incomplete clouds becomes less $q(n+1)$.

Denote the number of remaining incomplete clouds and free points as $qm$ and $q\ell$ respectively, $m\leq n$, $\ell < q - 1$. If $m=0$ then the algorithm is completed. If $m=n$ then we can turn $q$ incomplete clouds into free points and use these free points to transform the res $q(n-1)$ incomplete clouds into clouds. In both cases, the number of free points is at most $q(l+1)\leq q(q-1)<1+(q-1)n$, i.e., less than the volume of one cloud, which means that Hamming bound is achieved and the obtained number of clouds is optimal.

Let $m<n$. Deletion of one incomplete cloud, all sequences of which start with a symbol $i$, gives enough free points starting with $i$ to transform all remaining incomplete clouds into clouds. 

If $m\geq 2$, then we delete $q$ incomplete clouds and use the obtained free points to transform the remaining incomplete clouds into clouds. The number of remaining free points is equal
\begin{multline}
q\ell+qm(1+(q-1)(n-1))-q(m-1)(1+(q-1)n)\\=q(\ell-m(q-1)+1+(q-1)n)\leq q(q-1)(n-1),
\end{multline}
which is less than the volume of $q$ clouds. Therefore, the number of obtained clouds is equal to the maximal number, which is not greater than Hamming bound and divided by $q$, i.e., $U_q(n)$.

Moreover, the number $p_q(n)$ of points, which we lack to have enough points for $q$ more clouds, is at least
$$
q(1+n(q-1))-q(q-1)(n-1)=q^2.$$

Consider the last case $m=1$. We transform $r$ incomplete clouds, which sequences are starting with symbols $1, 2, \ldots, r$, into free points, which are used to create clouds from the remaining incomplete clouds. It is possible if we have free sequences, starting with $r+i$, $i=1, 2, \ldots, q - r$, i.e., if inequality $\ell\geq q-r-1$ holds. The parameter $r$ should be as small as possible, so we take $r=q-1-\ell$.

The number of free points at the end is equal to
\begin{multline*}
    q\ell + (1+(n-1)(q-1))q-(q-r)(1+n(q-1))\\
    =q(q-1)-qr-q(q-1)+r(1+n(q-1))=r(n-1)(q-1),
\end{multline*}
and the number $p_q(n)$ is equal to $qr$.

Theorem~\ref{th::non-binary DADA} is proved.
\end{proof}

\section{Complete Feedback}\label{sec::complete feedback}
In the case of complete feedback the optimal number of messages, which can be transmitted over a $q$-ary symmetric channel, was computed by Aigner in ~\cite{aigner1996searching}. We provide simple proof for this result.

\begin{theorem}\label{th::simpleProofCompleteFeedbackQary}
Let $n\geq q+1$ and
\begin{align*}
U_q(n) &= q\left\lfloor\frac{q^n}{q(1+n(q-1))}\right\rfloor,\\
p_q(n) &= (1+n(q-1))(U(n)+q)-q^n.
\end{align*}
Then it is possible to transmit $M_{cf}(n)$ messages over the $q$-ary symmetric channel with a single error and complete feedback, where
\begin{equation}\label{eq::binary complete feedback}
M_{cf}(n)=
\begin{cases}
U_q(n), &\text{ if }  p_q(n)\geq q^2;\\
U_q(n)+q-r,&\text{ if } p_q(n)=qr.
\end{cases}
\end{equation}
Moreover, this number of messages is optimal.
\end{theorem}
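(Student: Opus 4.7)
My plan is to prove achievability and the converse separately.

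For achievability, I would apply the non-binary DADA algorithm (Theorem~\ref{th::non-binary DADA}) inductively with $s=n-1$ (so that complete feedback is available), starting from the base case $n=q+1$ where Theorem~\ref{th::n less q+1} furnishes a code of size $q^{q-1}$. Unpacking that construction via Theorem~\ref{th::main} shows that its clouds all have the minimum Hamming cardinality $1+(q+1)(q-1)=q^2$, which is precisely the hypothesis DADA requires. A direct inductive check shows that $qM_{cf}(n-1)\geq H(n)$ for every $n\geq q+2$, so DADA always lands in its second or third branch and outputs a code of size $U_q(n)$ or $U_q(n)+q-r$, matching $M_{cf}(n)$ exactly.

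For the converse, I would analyze any complete-feedback code of size $M$ through its first-symbol partition. The first transmitted symbol is a deterministic function of the message, so $[M]$ splits into groups $A_0,\ldots,A_{q-1}$ of sizes $f_0,\ldots,f_{q-1}$. Conditioning on $y_1=y$: messages in $A_y$ still carry the full one-error budget, so each contributes a tail cloud of size exactly $1+(n-1)(q-1)$ in $\{0,\ldots,q-1\}^{n-1}$; messages in $A_z$ with $z\neq y$ have already spent their error on the first position, so each contributes a single tail sequence determined by $(m,y)$. Pairwise disjointness of all these tails yields, for every $y$,
\begin{equation}
f_y\bigl(1+(n-1)(q-1)\bigr)+(M-f_y)\leq q^{n-1}.
\end{equation}
Applying this to a $y$ achieving $f_y\geq\lceil M/q\rceil$, setting $u=U_q(n)/q$ and writing $q^{n-1}=u(1+n(q-1))+s$ for the integer slack $s$, a short rearrangement shows that $M>U_q(n)$ forces $M\leq U_q(n)+s-(n-1)(q-1)$. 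The identity $p_q(n)=q\bigl(1+n(q-1)-s\bigr)$ then translates the two cases of the formula directly: $p_q(n)\geq q^2$ is equivalent to $s\leq(n-1)(q-1)$ and forces $M\leq U_q(n)$, while $p_q(n)=qr$ with $r<q$ is equivalent to $s=1+n(q-1)-r$ and forces $M\leq U_q(n)+q-r$.

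I expect the main obstacle to be the algebraic bookkeeping that links the three parameters $U_q(n)$, $p_q(n)$, $H(n)$ (all defined via floors) to the concrete integer slack $s$, so that the case boundary $p_q(n)=q^2$ corresponds exactly to $s=(n-1)(q-1)$. On the achievability side, the subtle point is verifying that DADA's branching condition $qM_{cf}(n-1)\geq H(n)$ is maintained throughout the induction, since otherwise one would fall into DADA's first branch and obtain only a multiple of $q$, rather than the tight value $M_{cf}(n)$.
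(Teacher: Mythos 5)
Your achievability argument follows essentially the same route as the paper: induction on $n$ via non-binary DADA, with the base case $n=q+1$ supplied by Theorem~\ref{th::n less q+1}, and the whole burden resting on the branching condition $qM_{cf}(n-1)\geq H(n)$. The paper verifies that condition through the generic estimate $qM_{cf}(n-1)\geq \frac{q^n}{1+(n-1)(q-1)}-q^2$, which reduces to $q^{n-2}(q-1)\geq(1+(n-1)(q-1))(1+n(q-1))$; be aware that this estimate fails at $(q,n)=(3,5)$, which is why the paper inserts Proposition~\ref{prop::Ham+1} to cover $n=q+2$ for prime powers. Your blanket claim that a ``direct inductive check'' gives the condition for all $n\geq q+2$ is true for $q\geq 3$ (at $(3,5)$ one must use the exact value $M_{cf}(4)=9$, i.e.\ $27\geq H(5)=22$, rather than the generic bound), but it is false for $q=2$ (there $2M_{cf}(4)=4<H(5)=5$); the paper sidesteps $q=2$ by citing prior work, and you should do likewise or note the exception. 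The genuine difference is the converse: the paper explicitly declines to prove optimality, deferring to Aigner, whereas you give a self-contained argument. Your converse is correct: partitioning messages by the first transmitted symbol, conditioning on the received symbol $y$, and using disjointness of tail sets yields $f_y\bigl(1+(n-1)(q-1)\bigr)+(M-f_y)\leq q^{n-1}$ (with ``at least'', not ``exactly'', $1+(n-1)(q-1)$ tails per message of $A_y$, which is all you need); choosing $f_y\geq\lceil M/q\rceil$ and writing $q^{n-1}=u\bigl(1+n(q-1)\bigr)+s$ with $u=U_q(n)/q$, the assumption $M>U_q(n)$ gives $\lceil M/q\rceil\geq u+1$ and hence $M\leq U_q(n)+s-(n-1)(q-1)$, and the identity $p_q(n)=q\bigl(1+n(q-1)-s\bigr)$ indeed converts the two cases exactly as you state. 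So your proposal proves strictly more than the paper's proof does, at the cost of the small bookkeeping caveats above on the inductive check.
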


The following proposition will be used in the proof of the theorem, but it is also interesting by itself.
\begin{proposition}\label{prop::Ham+1}
    Let $q$ be a prime power and $n=\frac{q^{k}-1}{q-1}+1$. Then it is possible to transmit $M_{cf}(n)$ messages over a $q$-ary symmetric channel with a single error and one-time feedback.
\end{proposition}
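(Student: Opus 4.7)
The plan is to apply Theorem~\ref{th::main} with the split $n_1 = 1$ and $n_2 = n - 1 = \frac{q^k-1}{q-1}$. Under this split $n_2$ is precisely the Hamming length for parameter $k$, so every cloud in a nonadaptive single-error-correcting code of length $n_2$ has $1 + (q-1)n_2 = q^k$ sequences, while $1 + (q-1)n = q^k + q - 1$. Write $q^{n_2} = k_0(q^k+q-1) + s$ with $0 \le s < q^k+q-1$; then $k_0 = \lfloor q^{n_2}/(q^k+q-1)\rfloor$, so $U_q(n) = qk_0$ and $p_q(n) = q(q^k+q-1-s) = qr$ with $r := q^k+q-1-s$.

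Since $q$ is a prime power, the perfect $q$-ary Hamming code $\mathcal{H}$ of length $n_2$ exists and has $q^{n_2-k}$ codewords. Any subcode of $\mathcal{H}$ with $M$ codewords is a single-error-correcting code of length $n_2$ with exactly $F = q^{n_2} - Mq^k$ free points. For each $u \in \{0,\ldots,q-1\}$ I take $C(u)$ to be such a subcode of cardinality $M(u)$, chosen as follows: if $r \ge q$ (the case $p_q(n) \ge q^2$) set $M(u) = k_0$ for all $u$; if $r < q$ set $M(u) = k_0 + 1$ for $q - r$ values of $u$ and $M(u) = k_0$ for the remaining $r$. In either case $\sum_u M(u) = M_{cf}(n)$. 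The required subcodes of $\mathcal{H}$ exist because $q^k + q - 1 > q^k$ forces $k_0 < q^{n_2-k}$, hence $k_0 + 1 \le q^{n_2-k}$.

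It remains to verify the edge constraint~\eqref{constraints} of Theorem~\ref{th::main}, which for $n_1 = 1$ reduces to $\sum_{u' \ne u} M(u') \le F(u)$ for every $u$. Substituting $F(u) = q^{n_2} - M(u) q^k$ and $q^{n_2} = k_0(q^k+q-1) + s$, each case reduces to a short arithmetic check in $k_0$, $q^k$, $s$, and $r$. In particular, for $u$ with $M(u) = k_0 + 1$ in the second case, the inequality holds with equality, which is exactly what forces the one-time-feedback construction to match the complete-feedback optimum rather than falling short by one cloud.

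Once this verification is carried out, Theorem~\ref{th::main} yields a code with one-time feedback of size $\sum_u M(u) = M_{cf}(n)$. The only real obstacle is the bookkeeping in the second case: the multiplicities $M(u)$ must be distributed asymmetrically across the $q$ choices of first symbol so that the total hits $M_{cf}(n)$ exactly while the edge constraints remain simultaneously feasible, and the key observation is that the leftover $s$ in the division $q^{n_2} = k_0(q^k+q-1) + s$ supplies exactly the slack needed for the $q - r$ extra codewords.
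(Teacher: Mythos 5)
Your proof is correct, but it takes a formally different route from the paper's. The paper's own proof is a one-liner given the machinery of Section~\ref{sec::non-binary DADA}: it feeds the perfect Hamming code of length $n-1=\frac{q^k-1}{q-1}$ into non-binary DADA (Theorem~\ref{th::non-binary DADA}), checks $q\cdot q^{n-1-k}>\frac{q^n}{1+n(q-1)}$, and reads off the resulting size $U_q(n)$ or $U_q(n)+q-r$, i.e.\ $M_{cf}(n)$. You instead invoke Theorem~\ref{th::main} directly with $n_1=1$, $n_2=\frac{q^k-1}{q-1}$, taking as $C(u)$ subcodes of the perfect Hamming code of sizes $k_0$ or $k_0+1$. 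Your bookkeeping is right: with $q^{n_2}=k_0(q^k+q-1)+s$ and $r=q^k+q-1-s$ one indeed has $U_q(n)=qk_0$ and $p_q(n)=qr$; the subcodes exist since $k_0+1\le q^{n_2-k}$; and constraint~\eqref{constraints} at $v$ reads $M_{\mathrm{tot}}+M(v)\left(q^k-1\right)\le q^{n_2}$, whose left-hand side evaluates to $q^{n_2}-s$, $q^{n_2}-q^k+1$, and exactly $q^{n_2}$ (equality) in the three subcases, so the ``short arithmetic check'' you deferred does close, including your claim that the $M(v)=k_0+1$ constraints are tight. What each approach buys: yours is self-contained modulo Theorem~\ref{th::main}, avoids the DADA case analysis, and makes explicit why the one-time-feedback construction hits the complete-feedback optimum with no slack; the paper's presents the proposition as an instance of the inductive algorithm it needs anyway for Theorems~\ref{th::simpleProofCompleteFeedbackQary} and~\ref{th::two instances of feedback}. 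Morally the constructions coincide: one DADA step with prefix length one applied to a perfect code amounts exactly to your allocation of $k_0$ or $k_0+1$ clouds per first symbol, with the error-in-first-symbol sequences absorbed by the free points of the neighboring subcodes.
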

\begin{proof}[Proof of Proposition~\ref{prop::Ham+1}]
    Since $q$ is a prime power there exists a $q$-ary Hamming code of length $n-1=\frac{q^{k}-1}{q-1}$ and size $q^{n-1-k}$.
    We use non-binary DADA to construct a code of length $n$ with one-time feedback from the Hamming code of length $n-1$.
    Note that
    $$
    q\cdot q^{n-1-k}>\frac{q^n}{1+n(q-1)},$$
    hence the size of a new code equals $U_q(n)$ or $U_q(n)+q-r$ depending on $p_q(n)$, i.e., it is equal to $M_{cf}(n)$.
\end{proof}

\begin{proof}[Proof of Theorem~\ref{th::simpleProofCompleteFeedbackQary}]
We are not going to prove optimality, since this part of Aigner's results is already quite simple.

For the case $q=2$ the theorem has already been proved in~\cite{vorobyev2022correcting}, so we consider $q>2$.

We will prove the theorem by induction.
From Theorem~\ref{th::n less q+1} we know that for $n=q+1$ the optimal number of messages equals $q^{n-2}$. We note, that Eq.~\eqref{eq::binary complete feedback} gives the same value. Also from Proposition~\ref{prop::Ham+1} we know that for prime power $q$ the optimal number of messages can be transmitted with one-time feedback for $n=q+2$.

Assume that for length $n-1$, $n \geq q+3$ for prime powers $q$ and $n\geq q+2$ for others $q$, Eq.~\eqref{eq::binary complete feedback} is proved, i.e., there exists a set of $M_{cf}(n-1)$ clouds. We use non-binary DADA to construct a code of length $n$ from this set.

We want to show that 
\begin{equation}\label{eq::condition for DADA in simple proof}
qM_{cf}(n-1)>\frac{q^n}{1+n(q-1)}
\end{equation}

Since
\begin{align*}
qM_{cf}(n-1)\geq q^2\left(\frac{q^{n-1}}{q(1+(n-1)(q-1))}-1\right)\\
=\frac{q^n}{1+(n-1)(q-1)}-q^2
\end{align*}
it is enough to verify
$$
q^n(q-1)>q^2(1+n(q-1))(1+(n-1)(q-1).
$$

It is easy to check that $q^{n-3}\geq n^2$ for $q=3$, $n\geq n+3$ and for $q\geq 4$, $n\geq q+2$.
Therefore, 
\begin{multline*}
    q^n(q-1)\geq q^3(q-1)n^2\\>q^2(1+n(q-1))(1+(n-1)(q-1),
\end{multline*}
i.e., the condition~\eqref{eq::condition for DADA in simple proof} holds. It means that after applying DADA we obtain a code of size 
$U_q(n)$ or $U_q(n)+q-r$ depending on $p_q(n)$, i.e., its size is equal to $M_{cf}(n)$.
\end{proof}

\section{Two instances of feedback}\label{sec::two instances of feedback}
In this Section, we are going to prove our main theorem and show that Aigner's result can be achieved with only two instances of feedback.
\begin{theorem}\label{th::two instances of feedback}
Let $q$ be a prime power and $n\geq q+1$. Then it is possible to transmit $M_{cf}(n)$ messages over a $q$-ary symmetric channel with a single error and two instances of feedback.
\end{theorem}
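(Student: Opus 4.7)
The plan is to invoke Theorem~\ref{th::main} with the split $n_1 = 1$ and $n_2 = n - 1$, so that the first symbol plus its feedback accounts for one instance, and an additional one-time feedback inside each inner code $C(u)$ of length $n - 1$ accounts for the second, giving two instances in total. The goal is then to choose the family $\{M(u)\}_{u \in \{0,\dots,q-1\}}$ of inner sizes so that $\sum_u M(u) = M_{cf}(n)$ while respecting the constraints~\eqref{constraints}, and to exhibit one-time feedback SEC codes realizing the required sizes.

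The heart of the argument is arithmetic. Set $M_0 = \lfloor q^{n-1}/(1 + n(q-1))\rfloor$, so that $qM_0 = U_q(n)$. From the definition of $p_q(n)$ in Theorem~\ref{th::simpleProofCompleteFeedbackQary} one extracts the identity
\[
q^{n-1} = (M_0 + 1)\bigl(1 + n(q-1)\bigr) - r,
\]
where $r \in \{1,\dots,q\}$ is chosen so that $M_{cf}(n) = U_q(n) + q - r$ (taking $r = q$ when $p_q(n) \geq q^2$, in which case $M_{cf}(n) = U_q(n)$). Pick any $S \subseteq \{0,\dots,q-1\}$ with $|S| = q - r$ and set $M(u) = M_0 + 1$ for $u \in S$ and $M(u) = M_0$ otherwise, giving $\sum_u M(u) = qM_0 + (q - r) = M_{cf}(n)$. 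For $n_1 = 1$, each $v$ is adjacent in the bipartite graph $H$ to every $u \ne v$, so the constraint reads $M(v)(1 + (n-1)(q-1)) + \sum_{u \ne v} M(u) \leq q^{n-1}$. Expanding separately for $v \in S$ and for $v \notin S$, substituting the identity above, one checks that the inequality becomes an equality in the first case and reduces to the trivial $q - 1 \leq n(q-1)$ in the second, so Theorem~\ref{th::main} applies provided the inner codes exist.

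The remaining step, which I expect to be the main obstacle, is constructing one-time feedback SEC codes of length $n-1$ of the exact sizes $M_0$ and $M_0 + 1$. For $n - 1 = \frac{q^k-1}{q-1}$ this follows from Corollary~\ref{cor::hamming with feedback} (the nonadaptive Hamming code, truncated), and for $n - 1 = \frac{q^k-1}{q-1} + 1$ from Proposition~\ref{prop::Ham+1}; in both cases the code furnished has size at least $M_0 + 1$ and any smaller size is obtained by dropping codewords. For the remaining lengths one applies Theorem~\ref{th::non-binary DADA} (non-binary DADA) to a suitable shortened Hamming code of length $n-2$, or more generally uses Theorem~\ref{th::main} recursively with an inner split $n - 1 = m_1 + m_2$ where $m_2$ is chosen equal to a Hamming length $\ell_{k'} = (q^{k'}-1)/(q-1)$, so that nonadaptive Hamming codes of length $m_2$ can be truncated to the required internal sizes. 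A case analysis based on where $n-1$ lies among the Hamming lengths, together with the arithmetic bound $q^{m_2} \geq (1 + n(q-1))^2/(q-1)$ which guarantees the DADA output to reach $M_0 + 1$, then completes the proof for all prime power $q$ and all $n \geq q + 1$.
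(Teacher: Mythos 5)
Your composition route is genuinely different from the paper's: you apply Theorem~\ref{th::main} with the split $n_1=1$, $n_2=n-1$, assign unequal inner sizes $M_0+1$ (on $q-r$ prefixes) and $M_0$ (on the rest) with $M_0=\lfloor q^{n-1}/(1+n(q-1))\rfloor$, and your arithmetic verification of~\eqref{constraints} is correct (the $v\in S$ constraint is tight exactly when $p_q(n)=qr$, $r\le q-1$, and vacuous when $p_q(n)\ge q^2$ since then $S=\emptyset$). The paper instead builds a one-time-feedback code of length $n-1$ via Theorem~\ref{th::q-ary symmetric channel with one-time-feedback} and feeds it to non-binary DADA (Theorem~\ref{th::non-binary DADA}), which performs the same ``squeeze out $q-r$ extra clouds'' bookkeeping internally. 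Two caveats on your reduction: Theorem~\ref{th::main} is stated (and used in the paper) for nonadaptive inner codes $C(u)$; applying it with inner codes that themselves use one-time feedback is a mild but unstated extension whose sufficiency direction you should argue explicitly, and your count $F(v)=q^{n-1}-M(v)(1+(n-1)(q-1))$ additionally requires the inner codes to have all clouds of exactly minimal size, which must be noted (it does hold for codes obtained from Theorem~\ref{th::q-ary symmetric channel with one-time feedback} or DADA).

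The genuine gap is in your last paragraph, which is exactly where the real work lies: you need one-time-feedback single-error-correcting codes of length $n-1$ and size $M_0+1>q^{n-1}/(1+n(q-1))$ for \emph{every} prime power $q$ and every $n\ge q+1$, and you only gesture at this via an unverified bound $q^{m_2}\ge (1+n(q-1))^2/(q-1)$ and an unspecified case analysis. This is the same inequality the paper must confront (choosing $n_2$ as the largest Hamming length below $n-1$, so that $n\le qn_2+2$, and reducing to $q^{n_2}\ge (q-1)(qn_2+2)^2$), and it \emph{fails} for $q=3$ at small lengths: the paper proves it only for $n_2\ge 13$ (i.e.\ $n\ge 15$) and has to check $q=3$, $n\in[6,14]$ by hand, besides treating $n-1$ equal to a Hamming length or a Hamming length plus one separately via Hamming codes and Proposition~\ref{prop::Ham+1}. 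Your claim that the bound ``completes the proof for all prime power $q$ and all $n\ge q+1$'' is therefore not established as written; without carrying out the boundary cases (in particular $n$ just above $\frac{q^k-1}{q-1}$, where the largest usable Hamming length collapses) and the small-$q$ exceptions, the proposal does not yet prove the theorem.
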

This theorem can be seen as a generalization of Theorem~4 from~\cite{vorobyev2022correcting}, where the same result was proved for a binary channel.

\begin{proof}[Proof of Theorem~\ref{th::two instances of feedback}]
For $n=\frac{q^k-1}{q-1}$ and $n=\frac{q^k-1}{q-1} + 1$ the desired result can be achieved without feedback and with one-time feedback correspondingly by using Hamming codes and Proposition~\ref{prop::Ham+1}.

We choose $n_2$ as a maximal number of the form $\frac{q^k-1}{q-1}$, which is less than $n-1$, $n_1=n-1-n_2\geq 1$. For such $n_2$ and prime power $q$ there exists a Hamming code, therefore, we can apply Theorem~\ref{th::q-ary symmetric channel with one-time feedback} to construct the code $C_1(n-1)$ with one-time feedback of size 
$$
q^{n_1}\cdot 
\left\lfloor
\frac{q^{n_2}}{(q-1)n+1}
\right\rfloor.
$$

Then we use a non-binary DADA to construct a code $C_2(n)$ from a code $C_1(n-1)$. 

To finish the proof we need we need to show that
\begin{equation}\label{eq::cond for 2 instances of feedback}
   q^{n_1+1}
   \left\lfloor
\frac{q^{n_2}}{(q-1)(n-1)+1}
\right\rfloor>\frac{q^n}{(q-1)n+1} 
\end{equation}
After dividing both sides by $q^{n_1+1}$ and using inequality $\lfloor x\rfloor>x-1$, it is enough to check the following
$$
\frac{q^{n_2}}{(q-1)(n-1)+1}-1>
\frac{q^{n_2}}{(q-1)n+1}, 
$$
which is equivalent to
$$
q^{n_2}(q-1)>(1+(q-1)(n-1))(1+(q-1)n).
$$
Recall that $n_2$ is the maximal number of the form $\frac{q^k-1}{q-1}$, which is less than $n-1$, hence $n\leq qn_2+2$.
Since
\begin{multline*}
	(1+(q-1)(n-1))(1+(q-1)n)\\<(q-1)^2n^2\leq (q-1)^2(qn_2+2)^2, 
\end{multline*}
it is enough to verify that
$$
q^{n_2}\geq(q-1)(qn_2+2)^2
$$
for $n_2\geq q+1$.
It is easy to verify that for $q\geq 4$ this inequality holds. For $q=3$ this inequality is true for $n_2\geq 13$, i.e., for $n\geq 15$.
For $q=3$ and $n\in [6, 14]$ we manually check the existence codes with one-time feedback of length $n-1$ and size $M_1(n-1)$ such $q\cdot M_1(n-1)\geq \frac{q^n}{(q-1)n+1}$.
    
\end{proof}
\begin{remark}
    For the case of $q$ that is not a prime power, we were able to prove that $M_{cf}(n)$ messages can be transmitted with two instances of feedback only for big enough $n>n_0(q)$. We suspect that it can be proved for all $q$ and $n\geq q+1$.
\end{remark}

\section{Other Q-ary Channels}\label{sec::other channel}
In this Section, we provide an example of another $q$-ary channel with complete feedback, for which we can compute the optimal number of messages. In this channel $0$ can be changed into $1$, and $1$ can be changed into $0$, all other symbols are always transmitted without errors. The error graph for this channel is depicted in Figure~\ref{fig2}.

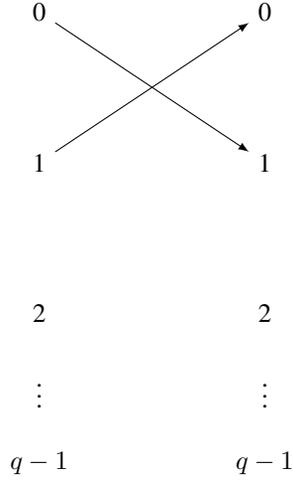
\begin{figure}[h]
\centering
	\begin{tikzpicture}
	\node (A) at (0,0) {0};
	\node (B) at (3,0) {0};
	\node (C) at (0,-2) {1};
	\node (D) at (3,-2) {1};
        \node (E) at (0,-4) {2};
	\node (F) at (3,-4) {2};
 \node (E) at (0,-5) {$\vdots$};
	\node (F) at (3,-5) {$\vdots$};
	\node (E) at (0,-6) {$q-1$};
	\node (F) at (3,-6) {$q-1$};
	\path[line] (C) -- (B);
	\path[line] (A) -- (D);

	\end{tikzpicture}
	
	\caption{Error Graph}\label{fig2}
\end{figure}

For this channel algorithm analogous to non-binary DADA allows us to compute the optimal number of messages under some conditions.
\begin{theorem}\label{th::other channel}
Let $q>2$ and $N$ is the maximal integer, which satisfies inequality $$2(q-2)^{N-1}\geq q^{N-1}.$$
Then for all $n\leq N$ the optimal number of messages, which can be transmitted over the channel with complete feedback depicted in Figure~\ref{fig2}, equals $$M(n)=\frac{q^n+(q-2)^n}{2}.$$
\end{theorem}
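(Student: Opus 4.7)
The plan is to induct on $n$, establishing the recursion $M(n) = q^{n-1} + (q-2)M(n-1)$ with base $M(1) = q - 1$, which solves to $(q^n + (q-2)^n)/2$. First I partition the messages by their first transmitted symbol $x_1 = s$ and write $a_s$ for the resulting group sizes. For $s \in \{2,\ldots,q-1\}$ the channel leaves $s$ unchanged, so the decoder faces a pure single-error sub-problem on the remaining $n-1$ symbols with $a_s$ messages, giving $a_s \leq M(n-1)$. For $s \in \{0,1\}$, once $y_1 = 0$ has been received the decoder must separate the $a_0$ messages of $A_0$ (still carrying a budget of one error) from the $a_1$ messages of $A_1$ (whose single permitted error has already fired, so they are now effectively error-free), and symmetrically after $y_1 = 1$.

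For the upper bound I would run a cloud-volume count in this $y_1 = 0$ sub-game: each of the $a_1$ error-free messages occupies at least one output sequence in $\{0,\ldots,q-1\}^{n-1}$, and each of the $a_0$ budget-one messages occupies at least its own root sequence, so $a_0 + a_1 \leq q^{n-1}$. Combined with the $q-2$ error-free branches this yields $M(n) \leq q^{n-1} + (q-2)M(n-1)$, and induction from $M(1) = q - 1$ gives the formula as an upper bound -- valid, in fact, without the hypothesis on $N$.

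For the matching construction I would inductively insert a code of size $M(n-1)$ into each of the $q-2$ branches $s \in \{2,\ldots,q-1\}$. For the two branches $x_1 \in \{0,1\}$ I would take $a_0 = \lfloor q^{n-1}/2 \rfloor$ and $a_1 = \lceil q^{n-1}/2 \rceil$; then after $y_1 = 0$ I would assign the $a_0$ budget-one messages codewords drawn from $\{2,\ldots,q-1\}^{n-1}$ (so each of their clouds collapses to a single point) and the $a_1$ budget-zero messages distinct remaining sequences from $\{0,\ldots,q-1\}^{n-1}$, with symmetric choices for $y_1 = 1$. Feasibility of this assignment is exactly $\lceil q^{n-1}/2 \rceil \leq (q-2)^{n-1}$, which is what the hypothesis $2(q-2)^{n-1} \geq q^{n-1}$ buys: when $q$ is even the inequality gives this directly, and when $q$ is odd the two sides have opposite parities, forcing strict inequality and absorbing the ceiling. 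The step I expect to need the most care is the reduction itself -- verifying that the messages of $A_1$ genuinely behave as budget-zero messages in the post-$y_1 = 0$ sub-problem on $n-1$ symbols with complete feedback, which relies on the fact that only the pair $\{0,1\}$ is confusable and that feedback makes the encoder's post-error strategy well-defined. Once this reduction is in place the induction is routine, and the maximality of $N$ appears exactly where needed to make the ceiling in $\lceil q^{n-1}/2 \rceil$ fit under $(q-2)^{n-1}$.
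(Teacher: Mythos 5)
Your proposal is correct, but it organizes both halves of the argument differently from the paper. For the converse, the paper does not recurse at all: it observes that any cloud whose root contains a symbol from $\{0,1\}$ has size at least $2$, so a code has at most $(q-2)^n$ size-one clouds and its size is bounded by $(q-2)^n+\bigl(q^n-(q-2)^n\bigr)/2$, a one-shot volume count over the whole output space; your recursive bound $M(n)\le q^{n-1}+(q-2)M(n-1)$, obtained by conditioning on the first symbol and counting disjoint nonempty clouds only inside the $y_1\in\{0,1\}$ sub-games, reaches the same value and, as you note, is valid for every $n$, but it is a genuinely different (and slightly longer) route. For the direct part, the paper's induction is DADA-style cloud surgery: it prepends all $q$ symbols to the length-$(n-1)$ clouds, deletes every incomplete cloud that came from a size-two cloud with prefix $0$ or $1$, uses the freed points to complete the incomplete size-one clouds (this is exactly where $2(q-2)^{n-1}\ge q^{n-1}$ enters, with no ceiling because the count is $q^{n-1}-(q-2)^{n-1}\le(q-2)^{n-1}$), and pairs the leftover incomplete size-one clouds across the prefixes $0$ and $1$. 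Your construction instead discards the inherited structure on the $0/1$ branches and makes a fresh balanced assignment $a_0=\lfloor q^{n-1}/2\rfloor$, $a_1=\lceil q^{n-1}/2\rceil$ with all-high root suffixes for the budget-one messages; this is cleaner and more explicit, at the price of the ceiling, which your parity argument for odd $q$ correctly absorbs. Both constructions end up with the same census of $(q-2)^n$ size-one and $\bigl(q^n-(q-2)^n\bigr)/2$ size-two clouds, and your reduction of the post-$y_1=0$ sub-problem (budget-one for $A_0$, budget-zero for $A_1$, encoder informed by feedback) is sound, so the proof goes through.
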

\begin{proof}[Proof of Theorem~\ref{th::other channel}]

We will construct a code with complete feedback of length $n$ with $M_1(n)$ clouds of size $1$ and $M_2(n)$ clouds of size $2$, where
\begin{align*}
    M_1(n)&=(q-2)^{n},\\
    M_2(n)&=(q^{n}-(q-2)^{n})/2.
\end{align*}
For $n=1$ we can easily construct a code with $M_1(1)=q-2, M_2(1)=1$. We construct codes for bigger lengths inductively.

Assume that for length $n-1$ we have already constructed a code with 
\begin{align*}
    M_1(n-1)&=(q-2)^{n-1},\\
    M_2(n-1)&=(q^{n-1}-(q-2)^{n-1})/2.
\end{align*}
Then we use the idea similar to non-binary DADA to construct a code of length $n$. By appending one symbol from the set $\{2, 3, \ldots, q-1\}$ at the beginning of each sequence of clouds for length $n-1$, we obtain $(q-2)M_1(n-1)=M_1(n)$ clouds of size $1$ and $(q-2)M_2(n-1)$ clouds of size $2$ for length $n$. After appending $0$ and $1$ we get incomplete clouds. Delete all incomplete clouds of size $2$ to obtain $q^{n-1}-(q-2)^{n-1}$ free points with $0$ at the first position and the same number with $1$ at the first position. We spend all these new free points to transform incomplete clouds of size $1$ into complete clouds of size $2$. Since $q^{n-1}-(q-2)^{n-1}\leq(q-2)^{n-1}$, after this procedure we will have $M_1(n)$ clouds of size $1$, some number of complete clouds of size $2$, and some even number of incomplete clouds of size $1$. By combining two incomplete clouds with starting symbols $0$ and $1$ we obtain a complete cloud of size $2$. We use all incomplete clouds of size $1$ to construct complete clouds of size $2$. In the end, we have $M_1(n)$ clouds of size $1$ and some amount of clouds of size $2$, which completely cover all $q^n$ points. Thus the number of clouds of size $2$ equals $(q^n-(q-2)^n)/2=M_2(n)$.

The constructed code has the maximal possible size, since it contains only clouds of sizes $1$ and $2$, and has the maximal possible number of clouds of size $1$. So, the maximal size of a code equals
$$
M(n)=M_1(n)+M_2(n)=\frac{q^n+(q-2)^n}{2}.$$
\end{proof}

\section*{Acknowledgment}
Ilya Vorobyev was supported by BMBF-NEWCOM, grant number 16KIS1005.

\bibliographystyle{IEEEtran}
\bibliography{mybib}

\end{document}